\newtheorem{claim}{}[section]
\newtheorem{theorem}[claim]{Theorem}
\theoremstyle{remark}
\renewenvironment{proof}{\noindent{\it Proof. \hskip0pt}}
                      {$\square$\par\medskip}
\begin{document}
\baselineskip 6.0 truemm
\parindent 1.5 true pc

\newcommand\xx{{\text{\sf X}}}
\newcommand\lan{\langle}
\newcommand\ran{\rangle}
\newcommand\tr{{\text{\rm Tr}}\,}
\newcommand\ot{\otimes}
\newcommand\ol{\overline}
\newcommand\join{\vee}
\newcommand\meet{\wedge}
\renewcommand\ker{{\text{\rm Ker}}\,}
\newcommand\image{{\text{\rm Im}}\,}
\newcommand\id{{\text{\rm id}}}
\newcommand\tp{{\text{\rm tp}}}
\newcommand\pr{\prime}
\newcommand\e{\epsilon}
\newcommand\la{\lambda}
\newcommand\inte{{\text{\rm int}}\,}
\newcommand\ttt{{\text{\rm t}}}
\newcommand\spa{{\text{\rm span}}\,}
\newcommand\conv{{\text{\rm conv}}\,}
\newcommand\rank{\ {\text{\rm rank of}}\ }
\newcommand\re{{\text{\rm Re}}\,}
\newcommand\ppt{\mathbb T}
\newcommand\rk{{\text{\rm rank}}\,}
\newcommand\SN{{\text{\rm SN}}\,}
\newcommand\SR{{\text{\rm SR}}\,}
\newcommand\HA{{\mathcal H}_A}
\newcommand\HB{{\mathcal H}_B}
\newcommand\HC{{\mathcal H}_C}
\newcommand\CI{{\mathcal I}}
\newcommand{\bra}[1]{\langle{#1}|}
\newcommand{\ket}[1]{|{#1}\rangle}
\newcommand\cl{\mathcal}
\newcommand\idd{{\text{\rm id}}}
\newcommand\OMAX{{\text{\rm OMAX}}}
\newcommand\OMIN{{\text{\rm OMIN}}}
\newcommand\diag{{\text{\rm Diag}}\,}
\newcommand\calI{{\mathcal I}}
\newcommand\bfi{{\bf i}}
\newcommand\bfj{{\bf j}}
\newcommand\bfk{{\bf k}}
\newcommand\bfl{{\bf l}}
\newcommand\bfp{{\bf p}}
\newcommand\bfq{{\bf q}}
\newcommand\bfzero{{\bf 0}}
\newcommand\bfone{{\bf 1}}
\newcommand\sa{{\rm sa}}
\newcommand\ph{{\rm ph}}
\newcommand\phase{{\rm ph}}
\newcommand\res{{\text{\rm res}}}
\newcommand{\algname}[1]{{\sc #1}}
\newcommand{\Setminus}{\setminus\hskip-0.2truecm\setminus\,}
\newcommand\calv{{\mathcal V}}
\newcommand\calg{{\mathcal G}}
\newcommand\calt{{\mathcal T}}
\newcommand\calvnR{{\mathcal V}_n^{\mathbb R}}
\newcommand\D{{\mathcal D}}
\newcommand\C{{\mathcal C}}
\newcommand\aaa{\alpha}
\newcommand\bbb{\beta}
\newcommand\ccc{\gamma}
\newcommand\tefrac{\textstyle\frac}
\newcommand\xxx{\xx^\sigma}
\newcommand\E{{\mathcal E}}
\newcommand\W{{\mathcal W}}
\newcommand\X{{\mathcal X}}
\newcommand\variable{{\mathcal X}}

\title{Partial separability/entanglement violates distributive rules}

\author{Kyung Hoon Han, Seung-Hyeok Kye and Szil\'ard Szalay}
\address{Kyung Hoon Han, Department of Data Science, The University of Suwon, Gyeonggi-do 445-743, Korea}
\email{kyunghoon.han at gmail.com}
\address{Seung-Hyeok Kye, Department of Mathematics and Institute of Mathematics, Seoul National University, Seoul 151-742, Korea}
\email{kye at snu.ac.kr}
\address{Szil\'ard Szalay, Wigner Research Centre for Physics, 29-33, Konkoly-Thege Mikl\' os, H-1121 Budapest, Hungary}
\email{szalay.szilard at wigner.mta.hu}

\subjclass{81P15, 15A30, 46L05, 46L07}

\keywords{partially separable, partially entangled, lattice, distributive rule}

\begin{abstract}
We found three qubit Greenberger-Horne-Zeilinger diagonal states which tell us that
the partial separability of three qubit states violates the distributive rules with respect to
the two operations of convex sum and intersection. The gaps between the convex sets involving the
distributive rules are of nonzero volume.
\end{abstract}

\maketitle

\section{Introduction}

Pure states in classical probability theory are uncorrelated, which
is not the case in quantum probability theory, where this
nonclassical form of correlation is called entanglement
\cite{Horodecki-2009}. Beyond the conceptual questions it raises
\cite{Schrodinger-1935a,Schrodinger-1935b}, entanglement plays a key
role in the physics of strongly correlated many-body systems
\cite{Amico-2008}, and also finds direct applications in quantum
information theory \cite{Nielsen-2000}. Mixed states in classical
probability theory arise as statistical mixtures (convex
combinations) of pure, hence uncorrelated states. Again, this is not
the case in quantum probability theory, and states which are
mixtures of uncorrelated states are called separable, while the
others are entangled \cite{Werner-1989}.

In the case of multipartite systems, the partitions of the total
system into subsystems give rise to various notions of partial
separability. In the tripartite case with the elementary subsystems
$A$, $B$ and $C$, we have three nontrivial partitions $A$-$BC$,
$B$-$CA$ and $C$-$AB$, and the corresponding partial separability
properties are called $A$-$BC$-separability, $B$-$CA$-separability
and $C$-$AB$-separability, respectively. We call these {\sl basic}
biseparabilities.

It is natural to consider the intersections (also called partial
separability classes) and convex hulls of the three convex sets
consisting of the above three kinds of basic biseparable states. For
example, the intersection of them \cite{bdmsst}, the intersections
of two of them \cite{{dur-cirac},{dur-cirac-tarrach}} and the convex
hull of them \cite{acin, seevinck-uffink} have been considered. More
recently, the convex hulls of two of them have also been considered
together with intersections and complements of convex sets arising
in the way \cite{{sz2012},{han_kye_bisep_exam},{han_kye_pe}},
leading to the description of the hierarchy of the intersections
\cite{{sz2015}}. See also \cite{{sz2011},{sz2018},{Szalay-2019}} for
further developments. Recall that the intersection and convex hull
of the convex sets of three basic biseparable states give rise to
{\sl fully biseparable} and {\sl biseparable} states, respectively.
Tripartite states which are not biseparable are called
genuinely multipartite entangled.

In this context, we consider the lattice generated by three convex sets of three qubit basic biseparable states
with respect to the above mentioned two operations, intersection and convex hull. In this way, we
deal with convex hulls of intersections, as well as intersections of convex hulls of convex sets
arising from basic biseparability. One may go further
to investigate the whole structures of partial separability and partial entanglement.
Due to technical reasons,
we will consider the three convex cones $\aaa$, $\bbb$ and $\ccc$ of all {\sl un-normalized}
$A$-$BC$, $B$-$CA$ and $C$-$AB$ biseparable three qubit states, respectively.
We note that the convex hull $\sigma\join\tau$ of two convex cones $\sigma$ and $\tau$ coincides with the sum $\sigma+\tau$ of them.
The intersection of two convex cones $\sigma$ and $\tau$ will be denoted by $\sigma\meet\tau$ following the lattice notations.
In general, the two operations $\meet$ and $\join$ among convex sets obey
associative rule and commutative rule. Furthermore, they also satisfy
the following relations
$$
\begin{aligned}
\sigma\join \sigma=\sigma,\qquad
&\sigma\meet \sigma=\sigma,\\
(\sigma\join \tau)\meet \sigma=\sigma,\qquad &(\sigma\meet \tau)\join \sigma=\sigma,
\end{aligned}
$$
and so they give rise to a lattice.

A lattice $(\mathcal L,\join,\meet)$ is distributive if the identities
$$
x \meet (y \join z) = (x \meet y) \join (x \meet z), \qquad x \join (y \meet z) = (x \join y) \meet (x \join z)
$$
hold for all $x, y, z \in \mathcal L$.
In general, the inequalities
$$
x \meet (y \join z) \ge (x \meet y) \join (x \meet z), \qquad x \join (y \meet z) \le (x \join y) \meet (x \join z)
$$
always hold trivially.
We denote by ${\mathcal L}$ the lattice generated by three convex cones
$\aaa$, $\bbb$ and $\ccc$. Therefore, ${\mathcal L}$ is the smallest lattice containing
the convex cones $\aaa$, $\bbb$ and $\ccc$ in the $64$-dimensional real vector space of all self-adjoint three qubit
matrices. The purpose of this note is to show that the lattice ${\mathcal L}$ does not satisfy
the distributive rules. More precisely, we show that both inequalities
\begin{align}
(\aaa\meet\bbb)\join(\aaa\meet\ccc)&\le \aaa\meet(\bbb\join\ccc),\label{dist-ineq}\\
\bbb\join(\ccc\meet\aaa)&\le (\bbb\join\ccc)\meet(\bbb\join\aaa)\label{dist-ineq-2}
\end{align}
are strict.
Furthermore, the gaps between the two sets are of nonzero volume, in both cases.
We also show that the lattice ${\mathcal L}$ does not satisfy the modularity
which is weaker than distributivity.

We note that a state $\varrho$ in the gap
$\aaa\meet(\bbb\join\ccc)\setminus (\aaa\meet\bbb)\join(\aaa\meet\ccc)$
by the strict inequality in (\ref{dist-ineq}) has the following properties:
\begin{itemize}
\item
$\varrho$ is $A$-$BC$ biseparable,
and it is a mixture of $B$-$CA$ and $C$-$AB$ biseparable states,
\item
but, it is not a mixture of a simultaneously $A$-$BC$ and $B$-$CA$ biseparable state and
a simultaneously $A$-$BC$ and $C$-$AB$ biseparable state.
\end{itemize}
We will find such states among GHZ diagonal states.
On the other hand, a state $\varrho$ arising by the strict inequality in (\ref{dist-ineq-2}) has the following properties:
\begin{itemize}
\item
$\varrho$ is a mixture of $B$-$CA$ and $C$-$AB$ biseparable states, and it is also a mixture
of a $B$-$CA$ and $A$-$BC$ biseparable states,
\item
but it is not a mixture of $B$-$CA$ biseparable state and a simultaneously $C$-$AB$ and $A$-$BC$
biseparable state.
\end{itemize}

Examples will be found among GHZ diagonal states. After we provide backgrounds for this in the next section, we find analytic examples in Section 3.
We also consider in Section 4 the lattices arising from partial separability in general multi-partite systems to see that both distributivity and modularity are violated in them, too. We close the paper to ask several questions.

The authors are grateful to the referee for bringing our attention to the general multi-partite cases.

\section{$\xx$-shaped states}

We will find required examples among so called $\xx$-shaped states whose entries are zeros except for
diagonal and anti-diagonal entries by definition.
A self-adjoint  $\xx$-shaped three qubit matrix
is of the form
$$
\xx(a,b,z)= \left(
\begin{matrix}
a_1 &&&&&&& z_1\\
& a_2 &&&&& z_2 & \\
&& a_3 &&& z_3 &&\\
&&& a_4&z_4 &&&\\
&&& \bar z_4& b_4&&&\\
&& \bar z_3 &&& b_3 &&\\
& \bar z_2 &&&&& b_2 &\\
\bar z_1 &&&&&&& b_1
\end{matrix}
\right),
$$
with $a,b\in\mathbb R^4$ and $z\in\mathbb C^4$.
Here, $\mathbb C^2\ot\mathbb C^2\ot \mathbb C^2$ is identified with
the vector space $\mathbb C^8$ with respect to the lexicographic order of indices.
Note that $\xx(a,b,z)$ is a state if and only if $a_i,b_i\ge 0$ and $\sqrt{a_ib_i}\ge |z_i|$ for each $i=1,2,3,4$.

We recall that every  GHZ diagonal states \cite{GHZ} are in this form, and an \xx-state $\xx(a,b,z)$
is GHZ diagonal if and only if $a=b$ and $z\in\mathbb R^4$. In this case, we use the notation
$$
\xx\left(\begin{matrix}a\\z\end{matrix}\right)=\xx(a,a,z).
$$

By a {\sl pair} $\{i,j\}$, we will mean an unordered set with two distinct elements for simplicity throughout this paper.
For a given three qubit $\xx$-shaped state $\varrho=\xx(a,b,z)$, we
consider the inequalities
\begin{center}
\framebox{
\parbox[t][1.9cm]{14.80cm}{
\addvspace{0.1cm} \centering
$$
\begin{array}{ll}
S_1[i,j]:&\quad \min\{\sqrt{a_ib_i},\sqrt{a_jb_j}\}\ge \max\{|z_i|,|z_j|\},\\
S_2[i,j]:&\quad
\min\left\{\sqrt{a_ib_i}+\sqrt{a_jb_j},\sqrt{a_k b_k}+\sqrt{a_\ell b_\ell}\right\}
   \ge\max\left\{|z_i|+|z_j|,|z_k|+|z_\ell|\right\},\\
S_3 :&\quad \textstyle{\sum_{j\neq i}\sqrt{a_jb_j}\ge |z_i|},\quad i=1,2,3,4
\end{array}
$$
}}
\end{center}\medskip
for a pair $\{i,j\}$, where $\{k,\ell\}$ is chosen so that $\{i,j,k,l\}=\{1,2,3,4\}$. By
\cite[Proposition 5.2]{han_kye_optimal}, we have the following:
\begin{itemize}
\item
$\varrho\in\aaa$ if and only if $S_1[1,4]$ and $S_1[2,3]$ hold,
\item
$\varrho\in\bbb$ if and only if $S_1[1,3]$ and $S_1[2,4]$ hold,
\item
$\varrho\in\ccc$ if and only if $S_1[1,2]$ and $S_1[3,4]$ hold.
\end{itemize}
We also have the following
\begin{itemize}
\item
$\varrho\in\bbb\join\ccc$ if and only if $S_2[1,4]$ (equivalently $S_2[2,3]$) holds,
\item
$\varrho\in\ccc\join\aaa$ if and only if $S_2[1,3]$ (equivalently $S_2[2,4]$) holds,
\item
$\varrho\in\aaa\join\bbb$ if and only if $S_2[1,2]$ (equivalently $S_2[3,4]$) holds.
\end{itemize}
by \cite[Theorem 5.5]{han_kye_pe}.
The inequality $S_3$ will not be used in this paper, but
it is the characteristic inequality for the convex cone $\aaa\join\bbb\join\ccc$
\cite{{gao},{han_kye_optimal},{han_kye_pe},{Rafsanjani}}.

We will consider the above inequality $S_2$ for arbitrary
two pairs $\{i,j\}$ and $\{k,\ell\}$
as follows:
\begin{center}
\framebox{
\parbox[t][0.7cm]{14.80cm}{
\addvspace{0.1cm} \centering
$S_4[i,j|k,\ell]:\quad
\min\left\{\sqrt{a_ib_i}+\sqrt{a_jb_j},\sqrt{a_k b_k}+\sqrt{a_\ell b_\ell}\right\}
   \ge\max\left\{|z_i|+|z_j|,|z_k|+|z_\ell|\right\}$.
}}
\end{center}\medskip
If $\{i,j\}=\{k,\ell\}$ then the inequality $S_4[i,j|k,\ell]$
holds automatically for any {\sf X}-states $\varrho=\xx(a,b,z)$. If $\{i,j\}\cap\{k,\ell\}=\emptyset$
then the three inequalities $S_2[i,j]$,  $S_2[k,\ell]$ and $S_4[i,j|k,\ell]$ are identical.
In the other cases, the resulting inequalities are new ones.

In order to get required examples in the gaps between convex cones in the inequalities
(\ref{dist-ineq}) and (\ref{dist-ineq-2}),
we proceed to characterize the following convex cones
\begin{equation}\label{target}
\aaa\join(\bbb\meet\ccc),\qquad
\bbb\join(\ccc\meet\aaa),\qquad
\ccc\join(\aaa\meet\bbb).
\end{equation}
For this purpose, we will use the duality among
convex cones in a real vector space with a bi-linear pairing $\lan\ ,\ \ran$.
For a convex cone $C$, the dual cone $C^\circ$ is defined by
$$
C^\circ=\{x\in V:\lan x,y\ran\ge 0\ {\text{\rm for each}}\ y\in C\}.
$$
We are now working in the real vector spaces of all three qubit self-adjoint $\xx$-shaped matrices, where
the bi-linear pairing is defined by $\lan x,y\ran=\tr(yx^\ttt)$, as usual. See \cite{han_kye_pe}.
Every closed convex cone $C$ satisfies the relation $C=(C^\circ)^\circ$, which tells us that
$x\in C$ if and only if $\lan x,y\ran\ge 0$ for every $y\in C^\circ$.

The dual cones of the cones in (\ref{target}) have also been
characterized in \cite{han_kye_pe}. For a given {\sf X}-shaped
self-adjoint matrix $W=\xx(s,t,u)$ with $s_i,t_i\ge 0$ and
$u\in\mathbb C^4$, we have considered the inequalities $W_1, W_2,
W_3$ given by
\begin{center}
\framebox{
\parbox[t][1.9cm]{10.00cm}{
\addvspace{0.1cm} \centering
$$
\begin{array}{ll}
W_1[i,j]: &\quad \sqrt{s_it_i}+\sqrt{s_jt_j}\ge |u_i|+|u_j|,\\
W_2[i,j]: &\quad \sum_{k\neq j}\sqrt{s_kt_k}\ge |u_i|,\quad \sum_{k\neq i}\sqrt{s_kt_k}\ge |u_j|,\\
W_3 :&\quad \sum_{i=1}^4\sqrt{s_it_i}\ge\sum_{i=1}^4|u_i|,
\end{array}
$$
}}
\end{center}\medskip
for a pair $\{i,j\}$.
Then we have the following by \cite[Proposition 3.3]{han_kye_pe}:
\begin{itemize}
\item
$W\in\aaa^\circ$ if and only if $W_1[1,4]$ and $W_1[2,3]$ hold,
\item
$W\in\bbb^\circ$ if and only if $W_1[1,3]$ and $W_1[2,4]$ hold,
\item
$W\in\ccc^\circ$ if and only if $W_1[1,2]$ and $W_1[3,4]$ hold.
\end{itemize}
On the other hand, we also have the following by \cite[Theorem 5.2]{han_kye_pe}:
\begin{itemize}
\item
$W\in\bbb^\circ\join\ccc^\circ$ if and only if $W_2[1,4]$, $W_2[2,3]$ and $W_3$ hold,
\item
$W\in\ccc^\circ\join\aaa^\circ$ if and only if $W_2[1,3]$, $W_2[2,4]$ and $W_3$ hold,
\item
$W\in\aaa^\circ\join\bbb^\circ$ if and only if $W_2[1,2]$, $W_2[3,4]$ and $W_3$ hold.
\end{itemize}

For given $a,b\in\mathbb R^4$ with nonzero entries, and $z\in\mathbb C^4$ with $\arg z_i=\theta_i$,
we consider the following self-adjoint matrices:
$$
W_{[i,j|k,\ell]}=
\xx\left(\sqrt{b_i \over a_i} E_i + \sqrt{b_j \over a_j} E_j ,
\sqrt{a_i \over b_i} E_i + \sqrt{a_j \over b_j} E_j ,
-e^{- {\rm i} \theta_k} E_k - e^{- {\rm i} \theta_\ell} E_l\right)
$$
for pairs $\{i,j\}$ and $\{k,\ell\}$, where $\{E_i\}$ is the usual orthonormal basis of $\mathbb C^4$.
Then the inequality
$$
\lan W_{[i,j|k,\ell]}, \xx(a,b,z)\ran\ge 0
$$
gives rise to the inequality $\sqrt{a_i b_i} + \sqrt{a_j b_j} \ge |z_k| + |z_\ell|$. Now, we present the main result.
By \cite[Proposition 2.2]{han_kye_pe}, the conditions give rise to necessary criteria
for general states to belong to classes in (\ref{target})
in terms of diagonal and anti-diagonal entries.

\begin{theorem}\label{join-statestep3}
For a given three qubit {\sf X}-state $\varrho=\xx(a,b,z)$, we have the
following:
\begin{enumerate}
\item[(i)]
$\varrho\in\aaa\join(\bbb\meet\ccc)$ if and only if $S_4[i,j|k,\ell]$ holds
whenever $\{i,j\}$, $\{k,\ell\}$ are two of $\{1,2\}$, $\{1,3\}$, $\{2,4\}$, $\{3,4\}$;
\item[(ii)]
$\varrho\in\bbb\join(\ccc\meet\aaa)$ if and only if $S_4[i,j|k,\ell]$ holds
whenever $\{i,j\}$, $\{k,\ell\}$ are two of $\{1,2\}$, $\{1,4\}$, $\{2,3\}$, $\{3,4\}$;
\item[(iii)]
$\varrho\in\ccc\join(\aaa\meet\bbb)$ if and only if $S_4[i,j|k,\ell]$ holds
whenever $\{i,j\}$, $\{k,\ell\}$ are two of $\{1,3\}$, $\{1,4\}$, $\{2,3\}$, $\{2,4\}$.
\end{enumerate}
\end{theorem}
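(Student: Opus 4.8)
The plan is to prove all three equivalences by convex duality, turning membership in the cones of (\ref{target}) into a family of scalar inequalities tested against the corresponding dual cone. The admissible pairs in (i) are precisely the four pairs other than $\{1,4\}$ and $\{2,3\}$, i.e. other than the two pairs attached to $\aaa$; likewise the excluded pairs in (ii) and (iii) are those attached to $\bbb$ and to $\ccc$. Hence the three statements are carried into one another by the index permutations implementing the relabelling of the subsystems $A,B,C$, and it suffices to prove (i) in detail, (ii) and (iii) following by relabelling. For (i), I would first invoke the cone identities $(\sigma\join\tau)^\circ=\sigma^\circ\meet\tau^\circ$ and $(\sigma\meet\tau)^\circ=\sigma^\circ\join\tau^\circ$ (the latter using that the cones in question are closed) together with the bipolar theorem for the closed cone $\aaa\join(\bbb\meet\ccc)$, to get that $\varrho\in\aaa\join(\bbb\meet\ccc)$ if and only if $\lan W,\varrho\ran\ge 0$ for every $W=\xx(s,t,u)$ in $\aaa^\circ\meet(\bbb^\circ\join\ccc^\circ)$. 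By the characterisations of $\aaa^\circ$ and of $\bbb^\circ\join\ccc^\circ$ quoted above, such a $W$ lies in this dual cone exactly when $W_1[1,4]$, $W_1[2,3]$, $W_2[1,4]$, $W_2[2,3]$ and $W_3$ all hold, so everything reduces to controlling the sign of $\lan W,\varrho\ran$ under these five constraints.

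For the necessity (\emph{only if}) direction the test matrices do the work. For each ordered pair of (not necessarily distinct) admissible pairs $\{i,j\}$ and $\{k,\ell\}$, the matrix $W_{[i,j|k,\ell]}=\xx(s,t,u)$ has $\sqrt{s_mt_m}$ equal to $1$ on $\{i,j\}$ and $0$ elsewhere, and $|u_m|$ equal to $1$ on $\{k,\ell\}$ and $0$ elsewhere. Substituting these $0/1$ patterns into the five defining inequalities of the dual cone, I would check that all five hold; this is exactly the point at which excluding the pairs $\{1,4\}$ and $\{2,3\}$ is forced, since the pattern $|u_1|=|u_4|=1$ already violates $W_1[1,4]$. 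Consequently $W_{[i,j|k,\ell]}\in\aaa^\circ\meet(\bbb^\circ\join\ccc^\circ)$, and $\lan W_{[i,j|k,\ell]},\varrho\ran\ge 0$ reads $\sqrt{a_ib_i}+\sqrt{a_jb_j}\ge|z_k|+|z_\ell|$. Ranging over all admissible ordered pairs assembles precisely the inequalities $S_4[i,j|k,\ell]$.

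The sufficiency (\emph{if}) direction is the heart of the argument. First I would reduce the pairing to a bilinear quantity: a direct computation gives $\lan W,\varrho\ran=\sum_i(a_is_i+b_it_i)+2\sum_i\re(z_iu_i)$, and applying the arithmetic--geometric mean inequality $a_is_i+b_it_i\ge 2\sqrt{a_ib_i}\sqrt{s_it_i}$ together with $\re(z_iu_i)\ge-|z_i||u_i|$ yields
$$
\lan W,\varrho\ran\ \ge\ 2\Big(\sum_i\sqrt{a_ib_i}\,\sqrt{s_it_i}-\sum_i|z_i|\,|u_i|\Big).
$$
Thus it is enough to prove $\sum_i\sqrt{a_ib_i}\,\sqrt{s_it_i}\ge\sum_i|z_i|\,|u_i|$ whenever $S_4$ holds and $W$ satisfies the five constraints. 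The key observation is that, for fixed $\varrho$, this is a single linear functional in the nonnegative variables $\sqrt{s_it_i}$ and $|u_i|$, and the five constraints together with nonnegativity cut out a pointed polyhedral cone; hence the functional is nonnegative on the whole cone as soon as it is nonnegative at each extreme ray. I would then enumerate the extreme rays and verify that at each one the inequality collapses to one of the component inequalities $\sqrt{a_pb_p}+\sqrt{a_qb_q}\ge|z_r|+|z_s|$ making up $S_4$, the extreme rays being generated by exactly the $0/1$ patterns supported on admissible pairs that appear in the test matrices of the previous step.

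The hard part is this extreme-ray analysis: one must show that the polyhedral cone defined by $W_1[1,4]$, $W_1[2,3]$, $W_2[1,4]$, $W_2[2,3]$, $W_3$ and nonnegativity has no extreme rays beyond the admissible $0/1$ patterns, or, should additional rays occur, that the target inequality still follows there from the $S_4$ family (combining several component inequalities if necessary). Equivalently, by Farkas' lemma, one must produce nonnegative multipliers, depending on $a,b,z$, expressing $\sum_i\sqrt{a_ib_i}\sqrt{s_it_i}-\sum_i|z_i||u_i|$ as a nonnegative combination of the five dual constraints; exhibiting these multipliers is the one genuinely computational point. Once (i) is in place, (ii) and (iii) demand only the bookkeeping of pushing the index relabellings through the dual-cone conditions.
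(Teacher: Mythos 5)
Your setup coincides with the paper's: dualize via the bipolar theorem, use the test matrices $W_{[i,j|k,\ell]}$ for necessity, and reduce the pairing by AM--GM to the scalar inequality $\sum_i\sqrt{s_it_i}\sqrt{a_ib_i}-\sum_i|u_i||z_i|\ge 0$. The necessity half is essentially the paper's argument, except that you ignore the degenerate case where $a$ or $b$ has a zero entry, in which your test matrices are undefined (the paper fixes this by passing to $\varrho+\varepsilon\,\xx({\bf 1},{\bf 1},{\bf 0})$ and letting $\varepsilon\to 0$). The genuine gap is in the sufficiency direction: you reduce it to showing that the linear functional $(x,v)\mapsto\sum_i\sqrt{a_ib_i}\,x_i-\sum_i|z_i|\,v_i$ is nonnegative on the polyhedral cone $K$ cut out by $W_1[1,4]$, $W_1[2,3]$, $W_2[1,4]$, $W_2[2,3]$, $W_3$ and nonnegativity, and then you explicitly defer the extreme-ray enumeration (equivalently, the Farkas multipliers) as ``the hard part'' and ``the one genuinely computational point'' without carrying it out. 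That unexecuted step \emph{is} the content of the ``if'' direction, so the proposal does not prove the theorem. Moreover, your working hypothesis about $K$ is false: its extreme rays are not only the $0/1$ patterns supported on admissible pairs. For instance, writing $x_i=\sqrt{s_it_i}$ and $v_i=|u_i|$, the point with $x_1=v_1=1$ and all other coordinates zero satisfies all seven inequalities and is an extreme ray (the six vanishing coordinates together with the active constraint $x_1=v_1$ give seven independent active facets), as are the rays $x_i=1$, all else zero. On the ray $x_1=v_1=1$ the required inequality is $\sqrt{a_1b_1}\ge|z_1|$, which does \emph{not} follow from the admissible $S_4$ inequalities (take $\sqrt{a_ib_i}=(0,1,1,1)$ and $|z_i|=(1,0,0,0)$: all six admissible $S_4$ conditions hold but $\sqrt{a_1b_1}<|z_1|$); it comes from $\varrho$ being a state. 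So even your fallback clause ``the target inequality still follows there from the $S_4$ family'' is wrong as stated, and the case analysis you would need must also invoke the state conditions and then certify every extreme ray of an eight-dimensional cone with fifteen facets.

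For comparison, the paper's sufficiency proof avoids polyhedral enumeration entirely by a short direct argument: one may assume $\varrho\notin\aaa$, and after relabelling that $|z_4|>\sqrt{a_1b_1}$; the admissible $S_4$ inequalities then give $\min\{\sqrt{a_2b_2},\sqrt{a_3b_3}\}\ge|z_4|+\left(p-\sqrt{a_1b_1}\right)$ with $p=\max\{|z_2|,|z_3|\}$, and two chains of estimates --- using $W_2[1,4]$, the state condition $|z_1|\le\sqrt{a_1b_1}$, and finally $W_1[1,4]$ and $W_1[2,3]$ --- yield $\sum_i\left(\sqrt{s_it_i}\sqrt{a_ib_i}-|u_i||z_i|\right)\ge 0$. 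Your route is guaranteed to be completable in principle (the theorem is true, hence the functional is nonnegative on every extreme ray of $K$, and each ray lies in the dual of the cone defined by the state-plus-$S_4$ inequalities), but a proof requires actually exhibiting the rays or the multipliers; alternatively, adopt the paper's case split on which $S_1$ inequality fails, which is what makes the verification finite and short.
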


\begin{proof}
We will prove (i). For the \lq only if\rq\ part, we first consider the case $a_i,b_i> 0$ for $i=1,2,3,4$. Then
we see that $W_{[i,j|k,\ell]}$ belongs to $\aaa^\circ\meet(\bbb^\circ\join\ccc^\circ)$
whenever $\{i,j\}$, $\{k,\ell\}$ are two of $\{1,2\}$, $\{1,3\}$, $\{2,4\}$, $\{3,4\}$,
by checking the required inequalities $W_1[1,4]$, $W_1[2,3]$, $W_2[1,4]$, $W_2[2,3]$ and $W_3$.
Now, the inequality $\lan W_{[i,j|k,\ell]}, \varrho\ran\ge 0$
gives rise to $S_4[i,j|k,\ell]$ for such $\{i,j\}$, $\{k,\ell\}$. If $a$ or $b$ has a zero entry, then we consider
$\varrho+\varepsilon\,\xx(\bf 1,\bf 1,\bf 0)$ with arbitrary $\varepsilon >0$ and take $\varepsilon\to 0$ for the conclusion,
where ${\bf 1}=(1,1,1,1)$ and ${\bf 0}=(0,0,0,0)$.

For the converse,
it suffices to show the following:
\begin{itemize}
\item
if $\varrho=\xx(a,b,z)$ satisfies $S_4[i,j|k,\ell]$ whenever $\{i,j\}$, $\{k,l\}$ are two of $\{1,2\}$, $\{1,3\}$, $\{2,4\}$, $\{3,4\}$, and
$W=\xx(s,t,u)$ satisfies $W_1[1,4]$, $W_1[2,3]$, $W_2[1,4]$, $W_2[2,3]$ and $W_3$, then
$\lan W,\varrho\ran\ge 0$,
\end{itemize}
by  Proposition 2.2 and Corollary 2.3 of \cite{han_kye_pe}.
If $\varrho\in\aaa$ then there is nothing to prove, and so we may assume that $\varrho\notin\aaa$.
Without loss of generality, we may also assume that $|z_4|> \sqrt{a_1b_1}$. Putting $p=\max\{|z_2|, |z_3|\}$, we have
$$
\min \{ \sqrt{a_2b_2}, \sqrt{a_3b_3} \} \ge |z_4| + \left(p -\sqrt{a_1b_1}\right),
$$
by the inequalities $S_4[i,j|k,\ell]$ for $\{i,j\}=\{1,2\}, \{1,3\}$ and $\{k,\ell\}=\{2,4\},\{3,4\}$.
Therefore, we have
$$
\begin{aligned}
\sum_{i=2}^4\sqrt{s_it_i}\sqrt{a_ib_i}-|u_4||z_4|
&\ge(\sqrt{s_2t_2} + \sqrt{s_3t_3}) \min\{\sqrt{a_2b_2},\sqrt{a_3b_3}\} + \sqrt{s_4t_4} |z_4| - |u_4| |z_4|\\
&\ge\left(\sum_{i=2}^4\sqrt{s_it_i}-|u_4|\right)|z_4|
   +(\sqrt{s_2t_2}+\sqrt{s_3t_3})\left(p-\sqrt{a_1b_1}\right)\\
&\ge\left(\sum_{i=2}^4\sqrt{s_it_i}-|u_4|\right)\sqrt{a_1b_1}
   +(\sqrt{s_2t_2}+\sqrt{s_3t_3})\left(p-\sqrt{a_1b_1}\right)\\
&= (\sqrt{s_4t_4} - |u_4|)\sqrt{a_1b_1} +(\sqrt{s_2t_2} + \sqrt{s_3t_3})p,
\end{aligned}
$$
by the inequality $W_2[4,1]$ and the assumption $|z_4|> \sqrt{a_1b_1}$. We also have
$$
\begin{aligned}
\sqrt{s_1t_1}\sqrt{a_1b_1}-\sum_{i=1}^3|u_i||z_i|
&\ge \sqrt{s_1t_1}\sqrt{a_1b_1}-|u_1|\sqrt{a_1b_1}-|u_2||z_2|-|u_3||z_3|\\
&\ge \sqrt{s_1t_1}\sqrt{a_1b_1}-|u_1|\sqrt{a_1b_1}-(|u_2|+|u_3|)p.
\end{aligned}
$$
Summing up the above two inequalities, we have
$$
\begin{aligned}
&\sum_{i=1}^4(\sqrt{s_it_i}\sqrt{a_ib_i}-|u_i||z_i|)\\
&\phantom{XX}\ge (\sqrt{s_1t_1}+\sqrt{s_4t_4}-|u_1|-|u_4|)\sqrt{a_1b_1}
+(\sqrt{s_2t_2}+\sqrt{s_3t_3}-|u_2|-|u_3|)p,
\end{aligned}
$$
which is nonnegative by the inequalities $W_1[1,4]$ and $W_1[2,3]$. Therefore, we have
$$
\begin{aligned}
\frac 12\lan \xx(s,t,u),\xx(a,b,z)\ran
&=\frac 12\sum_{i=1}^4(s_ia_i+t_ib_i+2\re(u_iz_i))\\
&\ge \sum_{i=1}^4(\sqrt{s_it_i}\sqrt{a_ib_i}-|u_i||z_i|)\ge 0,
\end{aligned}
$$
which completes the proof.
\end{proof}

\section{Examples}

In order to get analytic examples distinguishing the convex cones in the inequalities
(\ref{dist-ineq}) and (\ref{dist-ineq-2}), we consider GHZ diagonal states
$$
\varrho_{0,0}=\frac 18\xx\left(\begin{matrix}1&1&1&1\\0&0&0&0\end{matrix}\right),\quad
\varrho_{1,0}=\frac 18\xx\left(\begin{matrix}1&2&0&1\\1&0&0&1\end{matrix}\right),\quad
\varrho_{0,1}=\frac 1{12}\xx\left(\begin{matrix}2&1&1&2\\2&1&1&0\end{matrix}\right),
$$
and define
$$
\begin{aligned}
\varrho_{s,t}
&=(1-s-t)\varrho_{0,0}+s\varrho_{1,0}+t\varrho_{0,1}\\
&=\frac 18\xx\left(\begin{matrix}
1+\frac 13 t & 1+s-\frac 13t &1-s-\frac 13 t &1+\frac 13 t\\
s+\frac 43 t & \frac 23 t  &\frac 23 t& s\end{matrix}\right),
\end{aligned}
$$
for real numbers $s$ and $t$.
We consider the convex set $\mathbb D$ of all three qubit states, which is a $63$ affine dimensional convex body.
We slice $\mathbb D$ by the $2$-dimensional plane $\Pi$ determined by
$\varrho_{0,0}$, $\varrho_{1,0}$ and $\varrho_{0,1}$ to get the pictures for various convex sets.
We see that $\varrho_{s,t}$ is a state if and only if
$$
\begin{aligned}
\textstyle{
|s+\frac 43 t|\le 1+\frac13t,\quad
|\frac 23 t|\le 1+s-\frac 13 t,\quad
|\frac 23 t|\le 1-s-\frac 13t,\quad
|s|\le 1+\frac 13 t
}
\end{aligned}
$$
if and only if $(s,t)$ belongs to the region
$$
\textstyle{
R=\{(s,t): s+t\le 1,\ -s+t\le 1,\ -s-\frac 53t\le 1,\ s-\frac 13 t\le 1\}
}
$$
which is a quadrilateral on the $st$-plane with the four vertices
$(1,0)$, $(\textstyle\frac 23, -1)$, $(-1,0)$ and $(0,1)$.
Therefore, the $2$-dimensional convex body $\mathbb D\cap\Pi$ is also a quadrilateral with the vertices
\begin{equation}\label{vertex}
\varrho_{0,1},\quad
\varrho_{1,0},\quad
\varrho_{\frac 23,-1}=\frac 1{12}\xx\left(\begin{matrix}1&3&1&1\\-1&-1&-1&1\end{matrix}\right),\quad
\varrho_{-1,0}=\frac 18\xx\left(\begin{matrix}1&0&2&1\\-1&0&0&-1\end{matrix}\right).
\end{equation}
See Figure 1.

\begin{figure}
\begin{center}
\setlength{\unitlength}{5 truecm}
\begin{picture}(2,2)
\thinlines
\drawline(-0.1,1)(2.1,1)
\drawline(1,0)(1,2.05)
\drawline(2,1)(1,2)(0,1)(1.667,0)(2,1)
\dottedline{0.01}(0.818181818,1.818181818)(0.6,1.6)(0.454545455,0.727272727)(1.666666667,0)
         (1.857142857,0.571428571)(0.818181818,1.818181818)
\dottedline{0.01}(1.4,1.6)(0.6,1.6)(0.333333333,1.333333333)(0.714285714,0.571428571)(1.666666667,0)(1.4,1.6)
\thicklines
\drawline(1.4,1.6)(0.818181818,1.818181818)(0.230769231,1.230769231)(0.454545455,0.727272727)
         (1.666666667,0)(1.857142857,0.571428571)(1.4,1.6)
\drawline(1.5,1)(1,1.6)(0.6,1.6)(0.5,1)(0.714285714,0.571428571)(1.666666667,0)(1.5,1)
\put(2,1){\circle{0.03}}\put(2.01,1.03){$(1,0)$}
\put(1,2){\circle{0.03}}\put(1.015,2.01){$(0,1)$}
\put(0,1){\circle{0.03}}\put(-0.225,1.03){$(-1,0)$}
\put(1.4,1.6){\circle*{0.03}}\put(1.405,1.62){$(\frac 25,\frac 35)$}
\put(0.335,1.62){$(-\frac 25,\frac 35)$}
\put(0.085,1.37){$(-\frac 23,\frac 13)$}
\put(1.46,0.92){$\frac 12$}
\put(0.52,0.92){$-\frac 12$}
\put(0.86,0.34){$-\frac 35$}
\put(0.434545455,0.4872727){$(-\frac {2}{7},-\frac 3{7})$}
\put(0.818181818,1.818181818){\circle*{0.03}}\put(0.498181818,1.848181818){$(-\frac 2{11},\frac 9{11})$}
\put(0.230769231,1.230769231){\circle*{0.03}}\put(-0.100769231,1.250769231){$(-\frac {10}{13},\frac 3{13})$}
\put(0.454545455,0.727272727){\circle*{0.03}}\put(0.074545455,0.6572727){$(-\frac {6}{11},-\frac 3{11})$}
\put(1.666666667,0){\circle{0.03}}\put(1.686666667,-0.04){$(\frac {2}{3},-1)$}
\put(1.857142857,0.571428571){\circle*{0.03}}\put(1.877142857,0.521428571){$(\frac {6}{7},-\frac 3{7})$}
\put(0.45,1.3){$\ccc$}
\put(1.3,1.4){$\ccc$}
\put(0.75,1.65){$\bbb$}
\put(0.5,0.78){$\bbb$}
\put(1.67,0.5){$\bbb$}
\put(1.67,1.04){$\frac 23$}
\end{picture}
\end{center}
\caption{The difference between the whole quadrilateral $R$ and the bigger hexagon $H_1$ shows us that the distributive rules do not hold.
The states $\varrho_{s,t}$ in the region labeled by $\bbb$ and $\ccc$ belong to $\bbb\setminus\ccc$ and $\ccc\setminus\bbb$, respectively.
The smaller hexagon represents the convex set $\aaa\meet\bbb\meet\ccc$ which consists of PPT states.}
\end{figure}
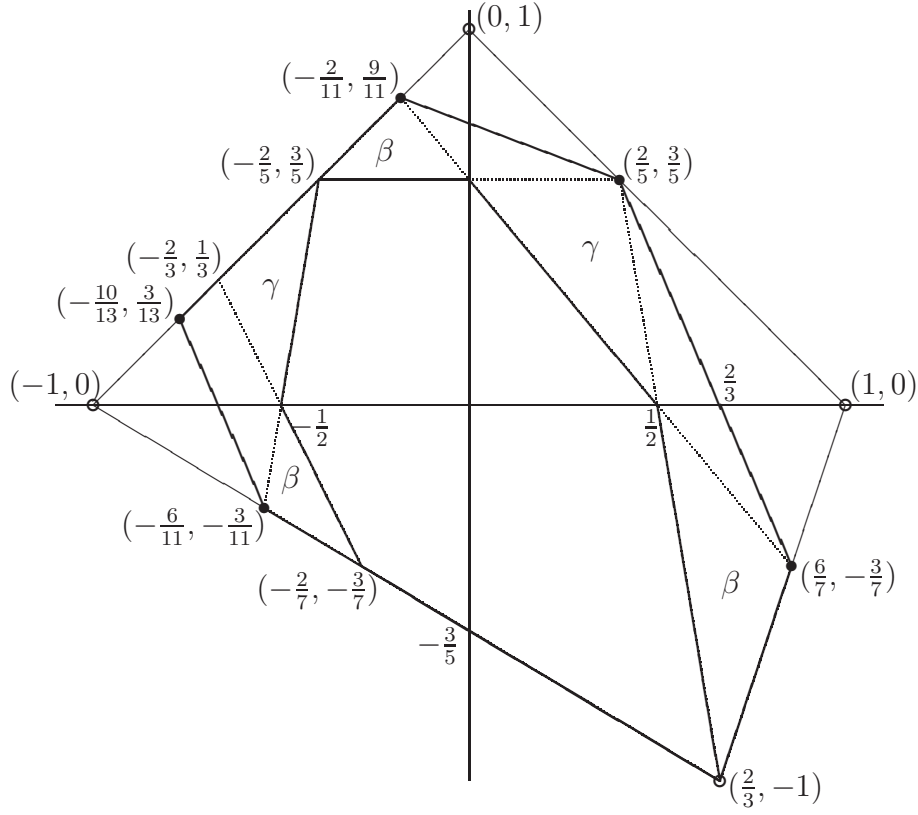

It is easily checked by $S_1[1,4]$ and $S_1[2,3]$
that the four states in (\ref{vertex}) belong to the convex set $\aaa$,
and so the convex set $\aaa$ on the plane $\Pi$ is represented by the quadrilateral $R$ itself.
Using $S_1[1,3]$, $S_1[2,4]$ and $S_1[1,2]$, $S_1[3,4]$, it is also easy to characterize $(s,t)$
such that $\varrho_{s,t}$ belongs to to $\bbb$ and $\ccc$, respectively.
One may check that
$\varrho_{s,t}\in\bbb$ if and only if it is a state and satisfies both inequalities $2s+\frac 53 t\le 1$
and $-2s+\frac 13 t\le 1$. Therefore, the region for $\bbb$ on the plane $\Pi$ is a pentagon
with vertices
$$
\textstyle{
(-\frac 25, \frac 35),\quad
(-\frac 2{11}, \frac 9{11}),\quad
(\frac 67, -\frac 37),\quad
(\frac 23, -1),\quad
(-\frac 6{11}, -\frac 3{11}).
}
$$
We also see that the region for $\ccc$ is determined by
$\frac 53t\le 1$, $-2s-t\le 1$ and $2s+\frac 13t\le 1$. This is also a pentagon with vertices
$$
\textstyle{
(-\frac 25, \frac 35),\quad
(\frac 25, \frac 35),\quad
(\frac 23, -1),\quad
(-\frac 27, -\frac 37),\quad
(-\frac 23, \frac 13).
}
$$

It is clear that the region for $\aaa\join\bbb$ or $\ccc\join\aaa$ on the plane $\Pi$ occupies all of the quadrilateral $R$.
One may also easily check by $S_2[1,4]$
that the four states in (\ref{vertex}) belong to $\bbb\join\ccc$,
and so the region for $\bbb\join\ccc$ coincides with the quadrilateral $R$. More precisely, the convex sets
$$
(\aaa\join\bbb)\cap\Pi=(\bbb\join \ccc)\cap\Pi=(\ccc\join\aaa)\cap\Pi=\aaa\cap\Pi
$$
are represented by the quadrilateral $R$.
The whole quadrilateral $R$ in Figure 1 thus represents the regions for the following convex sets
$$
\aaa,\quad \aaa\join\bbb,\quad \bbb\join\ccc,\quad
\ccc\join\aaa,\quad \aaa\join\bbb\join\ccc,\quad  \aaa\meet(\bbb\join\ccc),\quad \aaa\join(\bbb\meet\ccc)
$$
on the $st$-plane.
It should be noted that they are strictly bigger than the convex hull generated by $\bbb\cap\Pi$ and $\ccc\cap\Pi$.
For example, the state $\varrho_{1,0}\in\Pi$ in Figure 1 belongs to the convex hull $\bbb\join \ccc$,
but it is not a mixture of states in $\bbb\cap\Pi$ and $\ccc\cap\Pi$.
In fact, if $\varrho_{1,0}=\varrho_1+\varrho_2$ with $\varrho_1\in\bbb$ and $\varrho_1\in\ccc$ then one can easily see that
the {\sf X}-parts of $\varrho_1$ and $\varrho_2$ should be of the form
$$
\frac 18\xx\left(\begin{matrix}0&1&0&1\\0&*&0&1\end{matrix}\right)
\qquad {\text{\rm and}}\qquad
\frac 18\xx\left(\begin{matrix}1&1&0&0\\1&*&0&0\end{matrix}\right),
$$
respectively. Therefore, they never belong to the plane $\Pi$.

Now, we use Theorem \ref{join-statestep3} to find the region for the convex set $\bbb\join(\ccc\meet\aaa)$.
For pairs $\{i,j\}=\{1,2\},\{1,3\},\{1,4\},\{2,3\},\{2,4\},\{3,4\}$, the form
$8\sqrt{a_ib_i}+8\sqrt{a_jb_j}$ for the state $\varrho_{s,t}$ has the values
$$
2+s,\quad 2-s,\quad 2+\textstyle\frac 23 t,\quad
2-\textstyle\frac 23 t,\quad 2+s,\quad 2-s,
$$
respectively. On the other hands, $8|z_i|+8|z_j|$ becomes
$$
|s+\tfrac 43t|+\tfrac 23|t|,\quad
|s+\tfrac 43t|+\tfrac 23|t|,\quad
|s+\tfrac 43t|+|s|,\quad
\tfrac 43|t|,\quad
|s|+\tfrac 23|t|,\quad
|s|+\tfrac 23|t|.
$$
Therefore, we see that a state $\varrho_{s,t}$ belongs to $\bbb\join(\ccc\meet\aaa)$ if and only if
it belongs to $\ccc\join(\aaa\meet\bbb)$ if and only if the inequality
$$
\begin{aligned}
\min\{
2+s,\ 2-s,\ &2+\textstyle\frac 23 t,\ 2-\textstyle\frac 23 t\}\\
&\ge
\max\{
|s+\tfrac 43t|+\tfrac 23|t|,\
|s+\tfrac 43t|+|s|,\
\tfrac 43|t|,\
|s|+\tfrac 23|t|\}
\end{aligned}
$$
holds. One may check a point $(s,t)\in R$ satisfies this inequality if and only if
$$
\tfrac 12 s+\tfrac 43 t\le 1,\qquad
\tfrac 32s+\tfrac 23t\le 1,\qquad -\tfrac 32 s-\tfrac 23t\le 1,\qquad
$$
This region is represented by the bigger hexagon $H_1$ with the vertices
$$
\textstyle{
(-\frac 2{11},\frac 9{11}),\quad
(\frac 25,\frac 35),\quad
(\frac 67,-\frac 37),\quad
(\frac 23,-1),\quad
(-\frac 6{11},-\frac 3{11}),\quad
(-\frac {10}{13},\frac 3{13})
}
$$
in Figure 1.
Therefore, the difference $R\setminus H_1$ consisting of three triangles gives us examples in
\begin{equation}\label{gap-2}
(\bbb\join\ccc)\meet(\bbb\join\aaa)\ \setminus\ \bbb\join(\ccc\meet\aaa),
\end{equation}
which shows that the strict inequality holds in (\ref{dist-ineq-2}).

In order to consider the inequality (\ref{dist-ineq}),
we first note the inequality
\begin{equation}\label{modular}
(\aaa\meet\bbb)\join(\aaa\meet\ccc)\le \aaa\meet(\bbb\join(\ccc\meet\aaa))\le \aaa\meet(\bbb\join\ccc),
\end{equation}
which holds in general.
By the first inequality, the region for
$(\aaa\meet\bbb)\join(\aaa\meet\ccc)$ is also contained in $H_1$. In fact, it fills up all of $H_1$.
Indeed, it is clear that five vertices of $H_1$ belong to
$(\aaa\meet\bbb)\join(\aaa\meet\ccc)$ except for $(-\frac {10}{13},\frac 3{13})$ by $S_1[i,j]$. We also
see that
$$
52\,\varrho_{-\frac {10}{13},\frac 3{13}}
=\xx\left(\begin{matrix}7&1&11&7\\-3&1&1&-5\end{matrix}\right)
=\xx\left(\begin{matrix}3&1&3&3\\-3&1&1&-1\end{matrix}\right)
+\xx\left(\begin{matrix}4&0&8&4\\0&0&0&-4\end{matrix}\right)
$$
belongs to $(\aaa\meet\bbb)\join(\aaa\meet\ccc)$. Therefore, this coincides with
$\aaa\meet (\bbb\join(\ccc\meet\aaa))$ for states $\varrho_{s,t}$,
that is, the first inequality in (\ref{modular}) becomes an identity on the plane $\Pi$.
Therefore, the difference $R\setminus H_1$ again gives rise to examples in
\begin{equation}\label{gap}
\aaa\meet(\bbb\join\ccc)\ \setminus\ (\aaa\meet\bbb)\join(\aaa\meet\ccc),
\end{equation}
for the strict inequality in (\ref{dist-ineq}).

Now, we proceed to show that the gaps (\ref{gap-2}) and (\ref{gap}) arising in the distributive inequalities have nonzero volume.
We first note that the convex set $\mathbb S$ of all fully separable three qubit states generates the same affine manifold
as the convex set $\mathbb D$ of all three qubit states. See the discussion at the end of Section 7 in \cite{ha-han-kye}.
Therefore, all the convex sets in (\ref{gap-2}) and (\ref{gap}) generate the same affine manifold.
We also recall that a point $x_0$ of a convex set $C$ is called an {\sl interior point} of $C$
if it is an interior point of $C$ with respect to the affine space generated by $C$.
We note that the state $\varrho_{0,0}$ is a common interior point of the convex sets appearing in
(\ref{gap-2}) and (\ref{gap}) as well as $\mathbb S$ and $\mathbb D$.
If we consider the line segment
$x_t=(1-t)x_0+tx_1$ between an interior point $x_0$ of a convex set $C$
and an arbitrary point $x_1\in C$ then $x_t$ is also an interior point of $C$
for $0<t<1$. See \cite[Lemma 2.3]{kye-canad}.
Therefore, every interior point of $[\aaa\meet(\bbb\join\ccc)]\cap \Pi$ is actually
an interior point of $\aaa\meet(\bbb\join\ccc)$.
For example, $\varrho_{\frac 23,0}$ is an interior point of $\aaa\meet(\bbb\join\ccc)$
which is a boundary point of $(\aaa\meet\bbb) \join (\aaa\meet\ccc)$.
From this, we may conclude that the difference (\ref{gap}) has the nonempty interior by \cite[Proposition 7.4]{ha-han-kye}.
The exactly same argument also shows that the difference (\ref{gap-2}) also has the nonempty interior.

We also consider the smaller hexagon $H_2$ with vertices
$$
\textstyle{
(-\frac 25,\frac 35),\quad
(0,\frac 35),\quad
(\frac 12,0),\quad
(\frac 23,-1),\quad
(-\frac 27,-\frac 37),\quad
(-\frac 12,0)
}
$$
in Figure 1, which is the region
for $\varrho_{s,t}$ in $\bbb\meet\ccc$.
This represents also $\aaa\meet\bbb\meet\ccc$. Note that an $\xx$-shaped state belongs to $\aaa\meet\bbb\meet\ccc$ if and only if
it is of positive partial transpose by \cite[Theorem 5.3]{han_kye_optimal}.
Therefore, the region $H_2$ represents the region for PPT states for $\varrho_{s,t}$.

Recall that a lattice is called {\sl modular} if $x\le z$ implies $x\join(y\meet z)=(x\join y)\meet z$.
This is the case if and only if the modular identity
$$
(x\meet z)\join (y\meet z)= ((x\meet z)\join y)\meet z
$$
holds for every $x,y$ and $z$. Every distributive lattice is modular.  See  \cite{birkhoff,salii} for
elementary properties of modular lattices. We exhibit examples of states
showing that the first inequality in (\ref{modular}) is also strict,
to conclude that the lattice ${\mathcal L}$ is not modular.
To do this, we consider
$\varrho_1=\dfrac 1{12}\xx\left(\begin{matrix}2&1&1&2\\2&0&1&0\end{matrix}\right)$,
and put
$$
\varrho_t=(1-t)\varrho_{0,0}+t\varrho_1
=
\frac 1{24}\xx\left(\begin{matrix} 3+t &3-t & 3-t &3+t\\ 4t &0 &2t &0\end{matrix}\right).
$$
We also consider
$W=\xx\left(\begin{matrix}0&1&1&0\\-1&0&-1&0\end{matrix}\right)$,
which satisfies the inequality $W_2[i,j]$ for
$\{i,j\}=\{1,2\}, \{3,4\}, \{1,3\},\{2,4\}$ and $W_3$.
Therefore, we see that $W$ belongs to the
convex cone
$(\aaa^\circ\join\bbb^\circ)\meet(\aaa^\circ\join\ccc^\circ)$,
which is the dual of the convex cone
$(\aaa\meet\bbb)\join(\aaa\meet\ccc)$. Now, we see that
$\lan W,\varrho_t\ran=\frac 1{24}(6-8t)\ge 0$ if and only if
$t\le\frac 34$. Therefore, we conclude that $\varrho_t$ does not
belong to $(\aaa\meet\bbb)\join(\aaa\meet\ccc)$ for
$\frac 34 < t\le 1$.  On the other hand, one can easily check that
$\varrho_1\in \aaa\meet(\bbb\join(\ccc\meet\aaa))$ by
Theorem \ref{join-statestep3}.  Therefore, we see that $\varrho_t$
also belongs to the same cone. In fact, $\varrho_t$ is an interior
point of the cone $\aaa\meet(\bbb\join(\ccc\meet\aaa))$
for $0\le t<1$, because $\varrho_{0,0}$ is an interior point. Hence,
we also see that the gap for the first inequality in (\ref{modular})
has also nonzero volume.

Finally, we also characterize the full separability for the states we are considering.
To do this, we summarize the results in \cite{{guhne_pla_2011},{han_kye_GHZ},{kay_2011}}.
See also \cite{{chen_han_kye},{ha-han-kye}}. For a given GHZ diagonal state $\varrho=\xx(a,a,c)$
with $a,c\in\mathbb R^4$, we consider the following:
$$
\begin{aligned}
\lambda_5 = 2(+c_1+c_2+c_3+c_4),\quad &\lambda_6 = 2(-c_1-c_2+c_3+c_4),\\
\lambda_7 = 2(-c_1+c_2-c_3+c_4),\quad &\lambda_8 = 2(-c_1+c_2+c_3-c_4),\\
t_1=c_1(-c_1^2+c_2^2+c_3^2+c_4^2)-2c_2c_3c_4,\quad
&t_2=c_2(+c_1^2-c_2^2+c_3^2+c_4^2)-2c_1c_3c_4,\\
t_3=c_3(+c_1^2+c_2^2-c_3^2+c_4^2)-2c_1c_2c_4,\quad
&t_4=c_4(+c_1^2+c_2^2+c_3^2-c_4^2)-2c_1c_2c_3.
\end{aligned}
$$
When all the following inequalities
\begin{equation}\label{kufl}
\lambda_5\lambda_6\lambda_7\lambda_8 > 0,\qquad
t_1t_4\lambda_6\lambda_7 < 0,\qquad
t_2t_3\lambda_5\lambda_8 > 0
\end{equation}
hold, the state $\varrho=\xx(a,a,c)$ is fully separable if and only if
the inequality
\begin{equation}\label{sufficient}
\min \{a_1,a_2,a_3,a_4\} \ge {\sqrt{(\lambda_5\lambda_6+\lambda_7\lambda_8)
(\lambda_5\lambda_7+\lambda_6\lambda_8)
(\lambda_5\lambda_8+\lambda_6\lambda_7)}
\over
{8\sqrt{\lambda_5\lambda_6\lambda_7\lambda_8}}}
\end{equation}
is satisfied. In the other cases, $\varrho$ is fully separable if
and only if it is of PPT. For the state $\varrho_{s,t}$, the
conditions (\ref{kufl}) are given by
$$
s(3s+4t)<0,\qquad (9s^2+18st+4t^2)(9s^2+6st-4t^2)> 0.
$$
We note that a point $(s,t)$ on the line $t=as$ satisfies this condition if and only if $-\frac 34(3+\sqrt 5)\le a<-\frac 34$.
On the other hands, the square of right side
of (\ref{sufficient}) is given by
$$
\frac{t^2(9s^2+12st-4t^2)}{432s(3s+4t)}
$$
respectively.
See Figure 2.

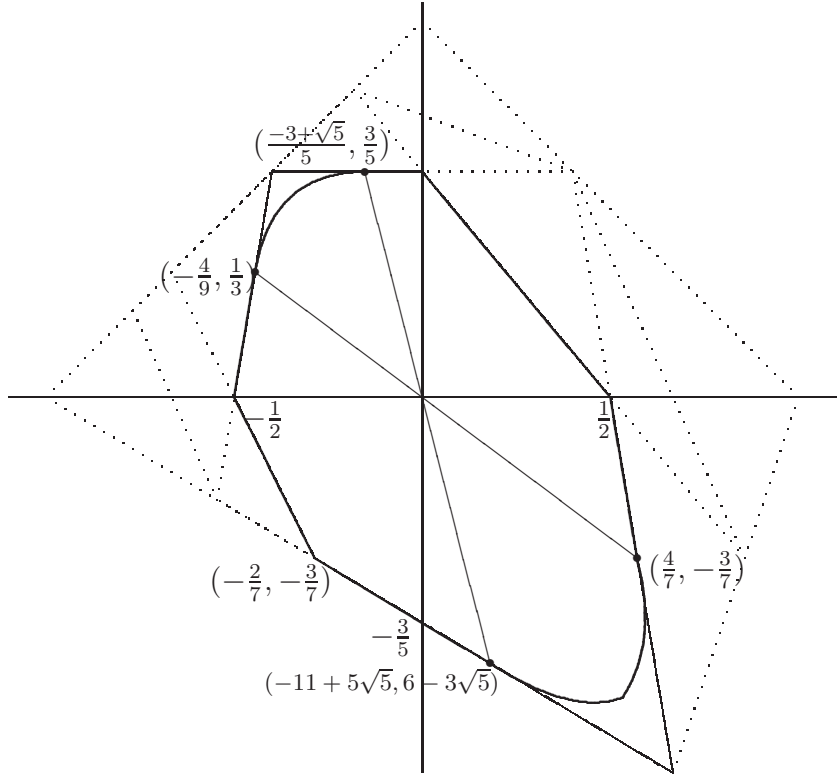
\begin{figure}
\begin{center}
\setlength{\unitlength}{5 truecm}
\begin{picture}(2,2)
\thinlines
\drawline(-0.1,1)(2.1,1)
\drawline(1,0)(1,2.05)
\dottedline{0.03}(2,1)(1,2)(0,1)(1.667,0)(2,1)
\dottedline{0.03}(0.818181818,1.818181818)(0.6,1.6)(0.454545455,0.727272727)(1.666666667,0)
         (1.857142857,0.571428571)(0.818181818,1.818181818)
\dottedline{0.03}(1.4,1.6)(0.6,1.6)(0.333333333,1.333333333)(0.714285714,0.571428571)(1.666666667,0)(1.4,1.6)
\dottedline{0.03}(1.4,1.6)(0.818181818,1.818181818)(0.230769231,1.230769231)(0.454545455,0.727272727)
         (1.666666667,0)(1.857142857,0.571428571)(1.4,1.6)
\dottedline{0.005}(1.5,1)(1,1.6)(0.6,1.6)(0.5,1)(0.714285714,0.571428571)(1.666666667,0)(1.5,1)
\put(1.46,0.92){$\frac 12$}
\put(0.52,0.92){$-\frac 12$}
\put(0.86,0.34){$-\frac 35$}
\put(0.434545455,0.4872727){$(-\frac {2}{7},-\frac 3{7})$}
\drawline(0.555555556,1.333333333)(1.571428571,0.571428571)
\drawline(0.847213595,1.6)(1.180339887,0.291796068)
\put(0.555555556,1.333333333){\circle*{0.02}}
\put(1.571428571,0.571428571){\circle*{0.02}}
\put(0.847213595,1.6){\circle*{0.02}}
\put(1.180339887,0.291796068){\circle*{0.02}}
\dottedline{0.005}(1,1.6)(0.6,1.6)(0.5,1)(0.714285714,0.571428571)(1.666666667,0)(1.5,1)
\thicklines
\drawline(0.847213595, 1.6)(1,1.6)(1.5,1)(1.571428571, 0.571428571)
\drawline(0.555555556, 1.333333333)(0.5,1)(0.714285714,0.571428571)(1.180339887, 0.291796068)
\qbezier(0.555555556, 1.333333333)(0.6,1.6)(0.847213595, 1.6)
\qbezier(1.571428571, 0.571428571)(1.63,0.35) (1.533333333, 0.2)
\qbezier(1.180339887, 0.291796068)(1.4,0.15) (1.533333333, 0.2)
\put(0.3,1.3){$(-\frac 49,\frac 13)$}
\put(0.55,1.65){$(\frac{-3+\sqrt 5}{5},\frac 35)$}
\put(1.6,0.52){$(\frac 47,-\frac 37)$}
\put(0.58,0.22){\scriptsize{$(-11+5\sqrt 5, 6-3\sqrt 5)$}}
\end{picture}
\end{center}
\caption{Two line segments through the origin are given by the conditions (\ref{kufl}), and
two curves surrounding the region of full separability are given by (\ref{sufficient}).}
\end{figure}

\section{General multi-partite cases}

Now, we turn our attention to general multi-partite system $M_{A_1}\ot M_{A_2}\otimes \cdots\otimes M_{A_n}$.
For a given partition $\Pi$ of local systems $\{A_1,A_2,\dots,A_n\}$, we denote by $\aaa_\Pi$
the convex cone consisting of all partially separable (unnormalized) states with respect to the partition $\Pi$.
When $M_{A_i}$ is the $d_i\times d_i$ matrices, we denote by ${\mathcal L}_{d_1,d_2,\dots,d_n}$
the lattice generated by $\alpha_\Pi$ through all nontrivial partitions $\Pi$
with respect to two operations, intersection and convex hull. Therefore, elements of this lattice
are convex cones sitting in the real vector space of all $d_1d_2\cdots d_n\times d_1d_2\cdots d_n$ Hermitian matrices.
So far, we have considered the lattice ${\mathcal L}_{2,2,2}$. We close this section by showing that the lattice
${\mathcal L}_{d_1,d_2,\dots,d_n}$ also violates distributivity and modularity.

For $d \ge 2$, we consider the canonical embedding
$$
\iota_d : \varrho \in M_2 \mapsto \varrho \oplus {\rm tr}(\varrho) 1_{d-2} \in M_d
$$
and the compression
$$
Q_d : \varrho \in M_d \mapsto \begin{pmatrix} I_2& 0 \end{pmatrix} \varrho \begin{pmatrix} I_2 \\ 0 \end{pmatrix} \in M_2.
$$
Here, the trace is a normalized one.
Then, the composition $Q_d \circ \iota_d$ is the identity map.
In other words, $M_2$ is unital completely positive (u.c.p.) complemented in $M_n$.
This means that every local property of $M_p \otimes M_q \otimes M_r$ is hereditary to $M_2 \otimes M_2 \otimes M_2$.
We write
$$
\begin{aligned}
\iota&=\iota_p \otimes \iota_q \otimes \iota_r : M_2 \otimes M_2 \otimes M_2 \to M_p \otimes M_q \otimes M_r\\
Q&=Q_p \otimes Q_q \otimes Q_r : M_p \otimes M_q \otimes M_r \to M_2 \otimes M_2 \otimes M_2
\end{aligned}
$$
for brevity.

We retain the notations $\aaa$, $\bbb$ and $\ccc$ for the three qubit case, that is,
$\alpha=\alpha_{A{\text{\rm -}}BC}$,
$\beta=\alpha_{B{\text{\rm -}}CA}$ and $\gamma=\alpha_{C{\text{\rm -}}AB}$ generate the lattice ${\mathcal L}_{2,2,2}$.
To make clear, we will use notations $\tilde\aaa$, $\tilde\bbb$ and $\tilde\ccc$ for generators of the lattice
${\mathcal L}_{p,q,r}$. It is easily seen that $\varrho\in\aaa$ implies $\iota(\varrho)\in\tilde\aaa$, and
$\omega\in\tilde\aaa$ implies $Q(\omega)\in\aaa$, and similarly for $\bbb$ and $\ccc$.
In order to show that ${\mathcal L}_{p,q,r}$ violates the distributive rules, we first take
$$
\varrho \in \left(\alpha \meet (\beta \join \gamma)\right) ~\backslash~ (\alpha \meet \beta) \join (\alpha \meet \gamma)
$$
in  $M_2 \otimes M_2 \otimes M_2$.
We can write $\varrho=\varrho_2+\varrho_3$ for $\varrho_2 \in \beta$ and $\varrho_3 \in \gamma$.
We have
$$
\iota(\varrho) \in \tilde\alpha, \qquad \iota(\varrho_2) \in \tilde\beta, \qquad \iota(\varrho_3) \in \tilde\gamma
$$
in $M_p \otimes M_q \otimes M_r$.
Thus, $\iota(\varrho)$ belongs to $\tilde\alpha \meet (\tilde\beta \join \tilde\gamma)$ in $M_p \otimes M_q \otimes M_r$.
Assume to the contrary that the distributive rule holds in $M_p \otimes M_q \otimes M_r$.
Then, $\iota(\varrho)$ belongs to $(\tilde\alpha \meet \tilde\beta) \join (\tilde\alpha \meet \tilde\gamma)$, and so
we can write $\iota(\varrho)=\omega_2+\omega_3$ with $\omega_2 \in \tilde\alpha \meet \tilde\beta$ and $\omega_3 \in \tilde\alpha \meet \tilde\gamma$.
Therefore, we have
$$
\varrho = Q \circ \iota (\varrho) = Q(\omega_2) + Q(\omega_3) \in (\alpha \meet \beta) \join (\alpha \meet \gamma),
$$
which is a contradiction. The same argument can be applied to the other distributive rule.
It should be noted that $\omega_2$ and $\omega_3$ themselves need not belong to the image of $\iota$ in the above argument.
Non-modularity also follows from that of three qubit system in the same fashion.

For general multi-partite cases, we consider the canonical embedding
$$
\iota : x \in M_{d_1} \otimes M_{d_2} \otimes M_{d_3} \mapsto x \otimes 1
   \in M_{d_1} \otimes M_{d_2} \otimes M_{d_3} \otimes M_{d_4} \otimes \cdots \otimes M_{d_n}
$$
and the (normalized) partial trace
$$
{\rm id} \otimes {\rm tr} : M_{d_1} \otimes M_{d_2} \otimes M_{d_3} \otimes M_{d_4} \otimes \cdots \otimes M_{d_n}
     \to M_{d_1} \otimes M_{d_2} \otimes M_{d_3}.
$$
Then, the composition $({\rm id} \otimes {\rm tr}) \circ \iota$ is the identity map.
In other words, $M_{d_1} \otimes M_{d_2} \otimes M_{d_3}$ is u.c.p. complemented in
$M_{d_1} \otimes M_{d_2} \otimes M_{d_3} \otimes M_{d_4} \otimes \cdots \otimes M_{d_n}$.
This means that every local property of the latter
is hereditary to $M_p \otimes M_q \otimes M_r$.
By the exactly same argument as before, one may check that the lattice ${\mathcal L}_{d_1,\dots, d_1}$ also violates
distributivity and modularity.

\section{Summary and further questions}

In this paper, we have considered the lattice ${\mathcal L}$
generated by three basic convex sets $\aaa$, $\bbb$ and $\ccc$
consisting of all $A$-$BC$ biseparable, $B$-$CA$ biseparable and
$C$-$AB$ biseparable three qubit states, respectively, with respect to the
operations of convex hull $\join$ and intersection $\meet$. In this
way, we may consider convex sets of partially separable states
obtaining by arbitrary convex hulls and intersections of $\aaa$, $\bbb$ and $\ccc$, and the whole
structure of partial separability may be revealed by mathematical
properties of the lattice $\mathcal L$. For general theory for lattices, we refer to
the monographs \cite{birkhoff}, \cite{freese} and \cite{salii}.

We gave the negative answer to the first natural question asking if
this lattice is distributive.
The lattice $\mathcal L$ is not even modular.
Another interesting question is to ask if the lattice ${\mathcal L}$ has
infinitely many elements. We conjecture this is the case. This
means that there are infinitely many kinds of partial separability
and partial entanglement. It is known that a free lattice with three generators must have infinitely many
elements. In this regard, it would be interesting to know if the lattice $\mathcal L$ is free or not.

The next question is whether the lattice ${\mathcal L}$ is complemented or not.
A lattice is called {\sl complemented} if every element $x$ has a complement $y$
which satisfies $x\meet y=0$ and $x\join y=1$, where $0$ and $1$ denote the least and greatest elements, respectively.
The least and the greatest elements of ${\mathcal L}$ are given by $\aaa\meet\bbb\meet\ccc$ and
$\aaa\join\bbb\join\ccc$, respectively. They represent the set of
all fully biseparable and biseparable states, respectively. Especially, we would like to ask
if $\aaa$ has a complement, that is, we ask
if there exist $\sigma\in{\mathcal L}$ such that
$\aaa\meet\sigma=\aaa\meet\bbb\meet\ccc$ and
$\aaa\join\sigma=\aaa\join\bbb\join\ccc$. Recall that the set of all
closed subspaces of a Hilbert space makes a lattice, the {\sl
subspace lattice}, with respect to the closed linear hull and
intersection. This plays an important role in quantum logic and
theory of operator algebras. We note that the subspace
lattice is non-distributive, but complemented.

Finally, we ask how the lattice ${\mathcal L}_{d_1,\dots,d_n}$ depends on the dimensions of the local systems.
In the case of tri-partite systems, we are asking if two lattices ${\mathcal L}_{2,2,2}$ and ${\mathcal L}_{p,q,r}$
are isomorphic to each other.

\smallskip

Acknowledgement: Both KHH and SHK were partially supported by the grant NRF-2017R1A2B4006655, Korea.
SzSz was supported by the National Research, Development and
Innovation Fund of Hungary within the {\it Researcher-initiated
Research Program} (project Nr:~NKFIH-K120569) and within the {\it
Quantum Technology National Excellence Program} (project
Nr:~2017-1.2.1-NKP-2017-00001), by the Ministry for Innovation and
Technology within the {\it {\'U}NKP-19-4 New National Excellence
Program}, and by the Hungarian Academy of Sciences within the {\it
J{\'a}nos Bolyai Research Scholarship} and the {\it ``Lend{\"u}let''
Program}.



\begin{thebibliography}{99}


\bibitem{Horodecki-2009}
R. Horodecki, P. Horodecki, M. Horodecki and K. Horodecki,
\it Quantum entanglement,
\rm Rev. Mod. Phys. {\bf 81} (2009), 865--942.

\bibitem{Schrodinger-1935a}
E. Schr{\"o}dinger,
\it {D}ie gegenw\" artige {S}ituation in der {Q}uantenmechanik,
\rm Naturwissenschaften {\bf 23} (1935) 823--828.

\bibitem{Schrodinger-1935b}
E. Schr{\"o}dinger,
\it Discussion of Probability Relations between Separated Systems,
\rm Math. Proc. Camb. Phil. Soc. {\bf 31} (1935), 555--563.

\bibitem{Amico-2008}
L. Amico, R. Fazio, A. Osterloh and Vlatko Vedral,
\it Entanglement in many-body systems,
\rm Rev. Mod. Phys. {\bf 80} (2008), 517--576.

\bibitem{Nielsen-2000}
M. A. Nielsen and I. L. Chuang,
Quantum Computation and Quantum Information,
Cambridge University Press, 2000.

\bibitem{Werner-1989}
R. F. Werner,
\it Quantum states with {E}instein-{P}odolsky-{R}osen correlations admitting a hidden-variable model,
\rm Phys. Rev. A, {\bf 40} (1989), 4277--4281.

\bibitem{bdmsst}
C. H. Bennett, D. P. DiVincenzo, T. Mor, P. W. Shor, J. A. Smolin and B. M. Terhal,
\it Unextendible product bases and bound entanglement,
\rm Phys. Rev. Lett. \bf 82 \rm (1999), 5385--5388.

\bibitem{dur-cirac-tarrach}
W. D\" ur, J. I. Cirac and R. Tarrach,
\it Separability and Distillability of Multiparticle Quantum Systems,
\rm Phys. Rev. Lett. {\bf 83} (1999), 3562--3565.

\bibitem{dur-cirac}
W. D\" ur and J. I. Cirac,
\it Classification of multi-qubit mixed states: separability and distillability properties,
\rm Phys. Rev. A {\bf 61} (2000), 042314.

\bibitem{seevinck-uffink}
M. Seevinck and J. Uffink,
\it Partial separability and etanglement criteria for multiqubit quantum states,
\rm Phys. Rev. A {\bf 78} (2008), 032101.

\bibitem{acin}
A. Acin, D. Bru\ss, M. Lewenstein and A. Sanpera,
\it Classification of mixed three-qubit states,
\rm Phys. Rev. Lett. {\bf 87} (2001), 040401.

\bibitem{sz2012}
Sz. Szalay and Z. K\" ok\' enyesi,
\it Partial separability revisited: Necessary and sufficient criteria,
\rm Phys. Rev. A {\bf 86} (2012), 032341.

\bibitem{han_kye_bisep_exam}
K. H. Han and S.-H. Kye,
\it Construction of three-qubit biseparable states distinguishing kinds of entanglement in a partial separability classification,
\rm Phys. Rev. A, {\rm 99} (2019), 032304.

\bibitem{han_kye_pe}
K. H. Han and S.-H, Kye,
\it On the convex cones arising from classifications of partial entanglement in the three qubit system,
\rm J. Phys. A: Math. Theor. {\bf 53} (2020), 015301.

\bibitem{sz2015}
Sz. Szalay,
\it Multipartite entanglement measures,
\rm Phys. Rev. A {\bf 92} (2015), 042329.

\bibitem{sz2011}
Sz. Szalay,
\it Separability criteria for mixed three-qubit states,
\rm Phys. Rev. A {\bf 83} (2011), 062337.

\bibitem{sz2018}
Sz. Szalay,
\it The classification of multipartite quantum correlation,
\rm J. Phys. A: Math. Theor. {\bf 51} (2018), 485302 .

\bibitem{Szalay-2019}
Sz. {\relax Sz}alay,
\it $k$-stretchability of entanglement, and the duality of $k$-separability and $k$-producibility,
\rm Quantum {\bf 3} (2019), 204.

\bibitem{GHZ}
D. M. Greenberger, M. A. Horne and A. Zeilinger,
\it Going beyond Bell's theorem,
\rm in  Kafatos M. (eds) Bell’s Theorem, Quantum Theory and Conceptions of the Universe. Fundamental Theories of Physics, Vol. 37, Springer, Dordrecht, 1989.

\bibitem{gao}
T. Gao and Y. Hong,
\it Separability criteria for several classes of $n$-partite quantum states,
\rm Eur. Phys. J. D {\bf 61} (2011), 765--771.

\bibitem{Rafsanjani}
S. M. H. Rafsanjani, M. Huber, C. J. Broadbent and J. H. Eberly,
\it Genuinely multipartite concurrence of N-qubit X matrices,
\rm Phys. Rev. A {\bf 86} (2012), 062303.

\bibitem{han_kye_optimal}
K. H. Han and S.-H, Kye,
\it Construction of multi-qubit optimal genuine entanglement witnesses,
\rm J. Phys. A: Math. Theor. {\bf 49} (2016), 175303.

\bibitem{ha-han-kye}
K.-C. Ha, K. H. Han and S.-H. Kye,
\it Separability of multi-qubit states in terms of diagonal and anti-diagonal entries,
\rm Quantum Inf. Process. {\bf 18} (2019), 34.

\bibitem{kye-canad}
S.-H. Kye,
\it Facial structures for the positive linear maps between matrix algebras,
\rm Canad. Math. Bull. {\bf 39} (1996), 74--82.

\bibitem{birkhoff}
G. Birkhoff,
Lattice Theory,
Colloq. Publ. Vol 25, Amer. Math. Soc. 1940.

\bibitem{salii}
V. N. Salii,
Lattices with unique complements,
Trans. Math. Monog. Vol 69, Amer. Math. Soc. 1988.

\bibitem{kay_2011}
A. Kay,
\it Optimal detection of entanglement in Greenberger-Horne-Zeilinger states,
\rm Phys. Rev. A {\bf 83} (2011), 020303(R).

\bibitem{guhne_pla_2011}
O. G\"uhne,
\it Entanglement criteria and full separability of multi-qubit quantum states,
\rm Phys. Lett. A {\bf 375} (2011), 406--410.

\bibitem{han_kye_GHZ}
K. H. Han and S.-H, Kye,
\it Separability of three qubit Greenberger-Horne-Zeilinger diagonal states,
\rm J. Phys. A: Math. Theor. {\bf 50} (2017), 145303.

\bibitem{chen_han_kye}
L. Chen, K. H. Han and S.-H, Kye,
\it Separability criterion for three-qubit states with a four dimensional norm,
\rm J. Phys. A: Math. Theor. {\bf 50} (2017) 345303.

\bibitem{freese}
R. Freese, J. Jezek and J. B. Nation,
Free Lattice, Math. Surv Monog. Vol 42, Amer. Math. Soc., 1995.

\end{thebibliography}
\end{document}